\newtheorem{theorem}{Theorem}[section]
\newtheorem{corollary}{Corollary}
\newcommand{\ff}[1]{\textcolor{black}{#1}}
\journal{Computer Networks}
\begin{document}

\begin{frontmatter}



\title{Optimistic Online Caching for Batched Requests}

\tnotetext[t1]{Research supported by Inria under the exploratory action MAMMALS. A preliminary version of this paper has been presented at IEEE ICC 2023~\cite{icc2023}.}


\author[inst1]{Francescomaria~Faticanti}

\author[inst1]{Giovanni~Neglia}

\affiliation[inst1]{organization={Inria},
            addressline={2004 route des Lucioles, BP 93}, 
            city={Sophia Antipolis},
            postcode={06902}, 
            country={France}}
            
\begin{abstract}
In this paper we study online caching problems where predictions of future requests, e.g., provided by a machine learning model, are available. Typical online optimistic policies are based on the Follow-The-Regularized-Leader algorithm and have higher computational cost than classic ones like LFU, LRU, as each update of the cache state 
requires to solve a constrained optimization problem. In this work we analysed the behaviour of two different optimistic policies in a \textit{batched} case, i.e., when the cache is updated less frequently in order to amortize the
update cost over time or over multiple requests. Experimental results show that such an optimistic batched approach
outperforms classical caching policies both on stationary and real traces.
\end{abstract}

\begin{keyword}
Caching \sep Online Optimization \sep Predictions \sep Batched Requests

\end{keyword}

\end{frontmatter}

\section{Introduction}\label{sec:intro}
Caching systems represent one of the most deeply studied research areas that span from the design of CPU hardware to the development of caching services in cloud computing, e.g., elastic caching systems for cloud and edge~\cite{carra2020elastic,carlsson2018worst}. The main objective of such systems is to reduce specific costs for the users, the network operator or the caching service provider. 
Caching policies have been studied under various assumptions on the arrival process of file requests. 
Recently online learning theory has been proposed to deal with caching settings where requests do not exhibit a regular pattern, and can be thought to be selected by an adversary~\cite{paschos2019learning,bhattacharjee2020fundamental,li2021online}.
Such an approach for the requests modeling stands in contrast to traditional stochastic models which can fail, e.g., in cases of small users' populations~\cite{leconte2016placing}.

Online caching has been studied in the online convex optimization (OCO) framework~\cite{shalev2012online} starting from the work~\cite{paschos2019learning}. In this setting, the main objective is to design algorithms that minimize the \textit{regret}, i.e., the difference between the cost incurred by the proposed solution and the cost of the optimal offline static solution that has complete knowledge of future requests over a fixed time horizon. Later contributions analyzed other online learning algorithms~\cite{salem2021no} and provided new lower bounds on the regret~\cite{bhattacharjee2020fundamental}.

Nowadays, thanks to the huge availability of data and resources in cloud systems, reliable predictions for future requests can be generated by machine learning (ML) models~\cite{gomez2015netflix,khanal2020systematic}.
Online algorithms that rely on such predictions are called \emph{optimistic}~\cite{mohri2016accelerating,mhaisen2022online}. 
References~\cite{mohri2016accelerating,rakhlin2013optimization} provide example of optimistic online algorithms
based on the Follow-The-Regularized-Leader (FTRL) and Online Mirror Descent (OMD) frameworks~\cite{shalev2012online}. 
Mhaisen et al.~\cite{mhaisen2022online} presented one of the first 
applications of optimistic online algorithms to a caching problem. 
They proved that  predictions, even if not perfectly accurate, can improve the performance of online algorithms. They designed an optimistic FTRL algorithm that operates on single requests requiring the cache to be updated 
each time a new file request is received. These updates are computationally very expensive, as they require to solve a constrained optimization problem, and can limit the applicability of online caching policies.
To amortize the update cost over time and over multiple requests a \textit{batched} approach can be adopted, where the caching system serves each request as it arrives, but updates the cache less frequently on the basis of the batch of requests collected since the last update~\cite{salem2021no}.
We stress that the batched approach does not cause any additional delay for the user. 

\ff{The novelty of this work resides in the study of
optimistic online caching policies able to work on 
batches of requests. Our main contributions are the following:\\
{\noindent \it 1)} We present a batched version of the optimistic caching policy in~\cite{mhaisen2022online}  and prove that it still enjoys sublinear regret.\\
{\noindent \it 2)} We introduce a new optimistic batched caching policy based on the per-component-based algorithm in~\cite{mohri2016accelerating}. \\
{\noindent \it 3)} We analytically characterize under which conditions each of these two caching policies outperforms the other.\\
{\noindent \it 4)} We determine when a batched operation provides better performance in terms of regret under different models for the predictions' error. \\
{\noindent \it 5)} We design optimistic versions of classical caching policies like LFU and LRU.\\
{\noindent \it 6)} We experimentally show, both on stationary traces and real ones, that our optimistic batched online caching policies outperform classical caching policies like LRU and LFU achieving both smaller service cost and per-request computational cost.}

\ff{The reminder of this paper is organized as follows. The next section discusses the main related works. Section~\ref{sec:model} introduces the system model and the problem description. In Section~\ref{sec:optimistic} we describe the optimistic caching framework and we present the main algorithms that take into account predictions: the one presented in~\cite{mhaisen2022online} and the one we propose. Section~\ref{sec:performance} presents an analysis of the regret bounds achieved by the two algorithms and a comparison between the single-request operation and the batched one. Experimental results are presented in Section~\ref{sec:experiments}. Finally, Section~\ref{sec:conclusions} concludes the paper.}

\section{Related Work}\label{sec:related}
\ff{Caching optimization problems have been deeply studied in the literature both on the offline and on the online perspective~\cite{paschos2020cache}. Several works have explored the offline static allocation of files under the assumption of knowing the requests~\cite{borst2010distributed,shanmugam2013femtocaching,poularakis2016exploiting}. On the online perspective, online caching policies based on gradient methods have been studied under the assumption of stochastic requests~\cite{ioannidis2010distributed,ioannidis2016adaptive}. In these works, the proposed algorithms have been evaluated under various performance metrics. We consider adversarial requests, i.e. the requests are thought as they are generated by an adversary trying to deteriorate the system's performance, and the regret as the main performance metric following the recent regret-based research on caching~\cite{bhattacharjee2020fundamental,li2021online,paria2021texttt,paschos2020online,salem2021no}. In this context, the main goal is to design algorithms with sublinear regret with respect to the time horizon leading to algorithms that behave as the optimal static solution in hindsight on average. Such online policies are called \textit{no-regret} algorithms~\cite{paschos2019learning}.}

\ff{Adversarial requests are considered in caching since Sleator and Tarjan's paper~\cite{sleator1985amortized} through the \textit{competitive ratio} metric. However, as proved in~\cite{andrew2013tale}, algorithms that ensure constant competitive ratio do not necessarily guarantee sublinear regret.} 

\ff{The main optimization framework adopted in this paper is the Online Convex Optimization (OCO). It was first introduced by Zinkevich~\cite{zinkevich2003online} showing that the projected gradient descent achieves sublinear regret bounds in the online setting. The works from Paschos et al.~\cite{paschos2019learning, paschos2020cache} were the first to apply the OCO framework to caching problems providing no-regret algorithms for the online caching problem. Bhattacharjee et al.~\cite{bhattacharjee2020fundamental} extended the work from Paschos et al. showing tighter lower bounds for the regret and proposing new online caching policies for the networked scenario based on the Follow-The-Perturbed-Leader (FTPL) algorithm. In our case, we consider the single-cache scenario and analyse the framework of the Follow-The-Regularized-Leader (FTRL) that has been proved one of the most promising algorithms for taking into account predictions in the online learning setting~\cite{mohri2016accelerating}. Indeed, as shown in~\cite{mhaisen2022optimistic}, the optimistic version of FTRL benefits more from the use of predictions with respect to the optimistic FTPL.}

\ff{The combination of predictions and caching has recently drawn attention given the significant usage of machine learning (ML) models for the computation of such predictions. The idea of exploiting predictions in the decision process has lead to the design of so called \textit{optimistic} online algorithms. Some works have already incorporated predictions in stochastic optimization~\cite{chen2018timely,huang2021online} assuming the requests and system perturbations to be stationary. In our work we do not make any assumption on the quality of the predictions that can be also thought as generated by and adversary. Mohri et al.~\cite{mohri2016accelerating} studied the regret performance of FTRL algorithms in adversarial settings including the predictions proving sublinear regret bounds. To the best of the authors' knowledge, Mhaisen et al.~\cite{mhaisen2022online} have been the first to apply optimistic online algorithms in the caching framework under adversarial settings. They proposed FTRL-based algorithms that, at each new request, update the cache state based on the previous incurred costs and the prediction for the next request. However, such algorithms imply the application of computationally-expensive operations, such as the projection on the domain set of the cache states~\cite{salem2023no,wang2015projection}, at each new request. To amortize the computational cost over time we propose to collect a \textit{batch} of requests before deciding the new cache state, leading to less frequent updates of the cache state. Theoretical analysis confirm that the size of such a batch does not affect the regret guarantees of the presented algorithms. A batched approach in caching has been presented in~\cite{salem2023no} but without taking into account predictions for future requests. Other optimistic online algorithms for caching are proposed in~\cite{mhaisen2022optimistic}. However, the proposed policies update the cache state at each new request, and the files are entirely stored in the cache, whilst, in line with recent works~\cite{paschos2019learning,mhaisen2022online}, we assume that the cache can store arbitrary fraction of files.}

\ff{The novelty of this work is in studying the performance of optimistic version of FTRL-based algorithms dealing with batches of requests. The account of batched requests reinforces also the use of predictions in the optimization process. It is reasonable indeed, when the predictions come from ML models, to involve a set of possible future requests in the predictions rather than a single future request. We show that the optimistic online batched algorithms introduced in this work present the best performance in terms of final miss-ratio and computational cost with respect the most practical and implemented caching policies.}

\section{System Description and Problem Formulation}\label{sec:model}
\subsection{System Model}
We consider the same system's setting  described in~\cite{salem2021no}. 
The system receives requests for equal-size files in the catalog $\mathcal{N} = \{1,2,\ldots,N\}$.
File requests are served by a single local cache  or by a remote server. In particular, a request for a file $i \in \mathcal{N}$ can be served by the cache for free or by a remote server incurring a per-file dependent cost $w_i \in \mathbb{R}^{+}$ (more details about our cost model below).
This cost can be related to the time needed to retrieve the file from a remote server, or be a monetary cost due to the utilisation of a third-party infrastructure for the file retrieval. We do not make any assumption on the requests arrival process, i.e., we analyse the system in an adversarial online setting where the requests can be thought as generated by an adversary trying to deteriorate system’s performance.\\
{\noindent \bf Cache State}. The local cache has finite capacity $k \in \{1,\ldots,N\}$, and it can store arbitrary fractions of files from the catalog as in~\cite{golrezaei2013femtocaching,paschos2019learning,mhaisen2022online}. We denote as $x_{t,i} \in [0,1]$ the fraction of file~$i$ stored in the cache at time $t$. The cache state, at time $t$, is then represented by the vector $\textbf{x}_t = [x_{t,i}]_{i\in \mathcal{N}}$ belonging to the set 
\begin{equation*}
\mathcal{X} = \left \{ x \in [0,1]^{N} | \sum_{i\in \mathcal{N}} x_{i} = k \right \}.
\end{equation*}
The set $\mathcal{X}$ is the capped simplex defined by the capacity constraint of the local cache.
It is sometimes convenient to express the cache capacity as a fraction of the catalog size, i.e., $k = \alpha N$, where $\alpha \in [0,1]$. 

{\noindent \bf Cache Updates}.
Caching decisions are taken after batches  (potentially of different sizes) of requests have been served. Formally, at each time-slot $t=1,\ldots,T$ the system collects $R_t$ requests from the users and then it may updates the cache state. The request process can then be represented as a sequence of vectors $\mathbf{r}_t = (r_{t,i} \in \mathbb{N} : i \in \mathcal{N})$, $\forall t$, where $r_{t,i}$ denotes the number of requests for file~$i$ in the $t$-th timeslot. The request process belongs then to the set
\begin{equation*}\label{eq:requests_set}
\mathcal{R} = \left \{ \mathbf{r}_t \in \mathbb{N}^N, t = 1, \dots, T | \sum_{i \in \mathcal{N}} r_{t,i} = R_t \right \}.
\end{equation*}
For some results we will rely on the following additional assumption (already proposed in~\cite{salem2021no}): 

{\it Assumption 1}. Every batch contains the same number of requests (i.e., $R_t=R$ for all $t\in {1, \dots, T}$) and the number of requests for each file within the batch is bounded by $h$ (i.e., $r_{n}^t \in \{0,\ldots,h\}$).

{\noindent \bf Cost Function}. For each new  batch of requests  $\textbf{r}_t$ the system pays a cost proportional to the missing fraction $(1-x_{t,i})$ for each file $i \in \mathcal{N}$ from the local cache. More formally: 
\begin{equation}\label{eq:cost}
f_{\mathbf{r}_t}(\mathbf{x}_t) = \sum_{i=1}^N w_i r_{t,i} (1-x_{t,i}).
\end{equation}
The sum is weighted by the cost $w_i$ and by the number of times $r_{t,i}$ file $i$ is requested in the batch $\mathbf{r}_t$.

{\noindent \bf Predictions}. Predictions for the next batch of requests can be the output of a ML model such as a neural network. Such prediction models can be similar to those used in streaming services like Netflix to provide recommendations to users on the basis of their history view~\cite{gomez2015netflix}. We assume that the predictor provides an estimate for the number of requests for each file in the next time-slot. We indicate with $\tilde{r}_{t+1,i}$ the prediction of the number of requests for file $i$ at time $t+1$.  It is then  possible to directly estimate the  gradient of the cost function in that time-slot. More formally, we denote by $\tilde{\mathbf{g}}_{t+1}$ the prediction of $\mathbf{g}_{t+1} = \nabla f_{\mathbf{r}_{t+1}}(\mathbf{x}_{t+1})$, the gradient of the cost function at time $t+1$, where $\tilde{g}_{t+1,i} = -w_i\tilde{r}_{t+1,i}$.

\subsection{Online Caching Problem}
We can fit our caching problem in the \textit{Online Convex Optimization} (OCO) framework~\cite{zinkevich2003online,hazan2016introduction}, where a learner (in our case the caching system) has to take a decision $\mathbf{x}_t$ from a convex set $\mathcal{X}$ at each time slot $t$ before the adversary selects the cost function $f_{\mathbf{r}_t}$
, i.e., the learner changes the cache state before experiencing the cost. Hence, the main objective is to devise a caching policy $\mathcal{A}$ that, at each time-slot $t$, computes the cache state $\mathbf{x}_{t+1}$ for the next time-slot given the current cache state $\mathbf{x}_t$, the whole history up to time $t$ ($(\mathbf{x}_1,r_1),\ldots,(\mathbf{x}_t,r_t)$), and possibly the predictions for the next time-slot.
As it is common in online learning, the main performance metric for the caching policy $\mathcal{A}$ is
the regret defined as
\begin{equation}\label{eq:regret}
    R_{T}(\mathcal{A}) = \sup_{\{\mathbf{r}_1,\ldots,\mathbf{r}_T\}} \left \{ \sum_{t=1}^T f_{\mathbf{r}_t}(\mathbf{x}_t) - \sum_{t=1}^T f_{\mathbf{r}_t}(\mathbf{x}^\star)\right \}.
\end{equation}
This function denotes the difference between the total cost obtained by the online policy $\mathcal{A}$ over a time horizon $T$, and the total cost of the best caching state $\mathbf{x}^\star$ in hindsight , i.e., $\mathbf{x}^\star = \arg \min_{x \in \mathcal{X}} \sum_{t=1}^T f_{\mathbf{r}_t}(\mathbf{x})$. The supremum in~\eqref{eq:regret} indicates an adversarial setting for the regret definition, i.e., the regret is measured against an adversary that generates requests trying to deteriorate the performance of the caching system. The main goal in this setting is to design a caching policy $\mathcal{A}$ that achieves sublinear regret, $R_T(\mathcal{A}) = o(T)$. This ensures a zero average regret as $T$ grows implying that the designed policy behaves on average as the optimal static one.

In what follows, given a sequence of vectors $(\mathbf{y}_1, \mathbf{y}_2, \dots, \mathbf{y}_t, \dots)$, we denote their aggregate sum up to time $t$ as $\mathbf{y}_{1:t}\triangleq \sum_{s=1}^t \mathbf{y}_s$.

\section{Optimistic Caching}\label{sec:optimistic}
As highlighted in~\cite{mhaisen2022online}, an optimistic caching policy can exploit, at each time-slot $t$, predictions for the requests at time $t+1$ in order to compute the caching state $\mathbf{x}_{t+1}$. 
The general scheme for optimistic online caching is described in Algorithm~\ref{algo:optCaching}. Given an initial feasible solution $\mathbf{x}_1 \in \mathcal{X}$, the cache operates at each time-slot $t$ as follows: i) the new batch of requests $\mathbf{r}_t$ is revealed; ii) based on the current cache state $\mathbf{x}_t$, the cache incurs the cost $f_{\mathbf{r}_t}(\mathbf{x}_t)$; iii) the cache receives the prediction $\tilde{\mathbf{g}}_{t+1}$ for the next time-slot, and iv)~based on such predictions and on all the history up to time $t$ ($(\mathbf{x}_1,r_1),\ldots,(\mathbf{x}_t,\mathbf{r}_t)$), it computes the next cache state $\mathbf{x}_{t+1}$.

In the OCO literature,  algorithms exploiting predictions are usually variants of the \textit{Follow-The-Regularized-Leader} (FTRL) algorithm~\cite{mcmahan2017survey,mohri2016accelerating}. 
\ff{The classic \textit{Follow-The-Leader} (FTL) algorithm~\cite{littlestone1994weighted} greedily selects the next state in order to minimize the aggregate cost over the past, i.e.,
\begin{equation*}
\mathbf{x}_{t+1} := \arg\min_{\mathbf x \in \mathcal X} \sum_{s=1}^t f_{\mathbf r_s} (\mathbf x)= \arg\min_{\mathbf x \in \mathcal X} \mathbf g_{1:t}^\intercal \mathbf x, 
\end{equation*}
where the last equality follows from the linearity of the cost functions.} The linearity of the problem leads FTL to commit to store entirely some files (i.e., $\mathbf x_{t+1} \in \{0,1\}^N$), but this can be exploited by the adversary and leads to a linear regret. The FTRL algorithm improves the performance of FTL by adding a non-linear proximal regularization term, which leads to more cautious updates.\footnote{
A regularizer 
is proximal if $\arg \min_{\mathbf{x} \in \mathcal{X}} r_{t}(\mathbf{x}) = \mathbf{x}_t$.
} Let $r_t(\mathbf x)$ be the regularization function used at time~$t$ (to be specified later). The FTRL algorithm's update step is given by
\begin{equation}\label{eq:update}
    \mathbf{x}_{t+1} := \arg \min_{\mathbf{x} \in \mathcal{X}} \left \{ r_{1:t}(\mathbf{x}) + (\mathbf{g}_{1:t}+\tilde{\mathbf{g}}_{t+1})^{\top}\mathbf{x}\right \}.
\end{equation}
As we are going to see, the function to minimize in~\eqref{eq:update} is a quadratic function. The Problem~\ref{eq:update} can then be solved through popular solvers like \texttt{CVX}, but the presence of the constraint $\mathbf x \in \mathcal X$  makes the update a potentially expensive operation, motivating the batched operation we propose.

\begin{algorithm}[t]
\Input{$\mathcal{N},k,x_1\in \mathcal{X}$}
\For{$t = 1,\ldots,T$}{
    Receive the batch of requests $\mathbf{r}_t$\;
    Incur cost $f_{\mathbf{r}_t}(\mathbf{x}_t)$\;
    Receive the new prediction $\tilde{\mathbf{g}}_{t+1}$\;
    Compute $\mathbf{x}_{t+1}$ taking into account $\tilde{\mathbf{g}}_{t+1}$ and the history $((\mathbf{x}_1,\mathbf{r}_1),\ldots,(\mathbf{x}_t,\mathbf{r}_t))$ according to~\eqref{eq:update}.
}
\caption{\texttt{Optimistic Online Caching}}
\label{algo:optCaching}
\end{algorithm}
In what follows, we describe two particular FTRL instances applied to our caching problem. The two instances  
differ by the specific regularization function used  in~\eqref{eq:update} for updating the cache state (line 5 of Algorithm~\ref{algo:optCaching}). 

\subsection{Optimistic Bipartite Caching (OBC)}\label{sec:OBC}

The first algorithm is called \textit{Optimistic Bipartite Caching} (OBC) and was introduced in~\cite{mhaisen2022online} for a  bipartite caching system with a single request at each time-slot. OBC adopts as proximal regularizer
\begin{equation}\label{eq:regMhaisen}
 r_t(\mathbf{x}) = \frac{\sigma_t}{2} \lVert \mathbf{x} - \mathbf{x}_t \rVert^2, t \ge 1,
\end{equation}
with the following parameters
\begin{equation}\label{eq:paramMhaisen}
\sigma_t = \sigma(\sqrt{h_{1:t}}-\sqrt{h_{1:t-1}}), \quad \text{where} \quad h_t = \lVert \mathbf{g}_t - \tilde{\mathbf{g}}_t \rVert^2,
\end{equation}
and $\sigma \ge 0$. 
The regularizer $r_{1:t}(\mathbf x)$ is 1-strongly convex with respect to the norm $\lVert \mathbf{x}\rVert_{(t)} = \sqrt{\sigma_{1:t}}\Vert \mathbf{x}\rVert$ whose dual norm we denote by $\lVert \mathbf{x}\rVert_{(t),\star}$.
The regularizer depends  on the Euclidean distance between the actual gradient $\mathbf g_t$ and the predicted one~$\tilde{\mathbf g}_t$. Qualitatively, if  predictions are very accurate, $r_{1:t}(\mathbf x)$ is small  and then the update in~\eqref{eq:update} will focus on minimizing the (predicted) aggregate cost $(\mathbf{g}_{1:t}+ \tilde{\mathbf{g}}_{t+1})^\intercal \mathbf x$. On the contrary, if predictions are not accurate, the regularizer will lead to more cautious updates.
The regularization function can then be interpreted as an implicit adaptive learning rate~\cite{mohri2016accelerating}: as gradient predictions become more accurate the algorithm \textit{accelerates} towards the minimum of the aggregate cost $(\mathbf{g}_{1:t}+ \tilde{\mathbf{g}}_{t+1})^\intercal \mathbf x$.

In the next section, we present theoretical guarantees on the OBC's regret for the batched setting considered in this paper.

\subsection{Per-Coordinate Optimistic Caching (PCOC)}\label{sec:PCOC}

Mohri et al.~\cite[Corollary~2]{mohri2016accelerating} proposed an FTRL algorithm where the regularization function decomposes over the coordinates and thus the acceleration occurs on a per-coordinate basis. In this case, if gradient predictions are more accurate on certain coordinates, the algorithm will accelerate the convergence of such coordinates. Here we present a generalization of this algorithm, called \textit{Per-Coordinate Optimistic Caching} (PCOC), which introduces a generic parameter $\sigma$ in the definition of the regularization function:
\begin{equation}\label{eq:regulNew}
r_{t}(\mathbf{x}) = \sum_{i=1}^N \sum_{s=1}^t  \frac{\sigma_{t,i}}{2}(x_i-x_{s,i})^2,
\end{equation}
where $\sigma_{t,i} = \sigma (\Delta_{t,i} - \Delta_{t-1,i})$, and $\Delta_{s,i} = \sqrt{\sum_{a=1}^s (g_{a,i}-\tilde{g}_{a,i})^2}$. The function $r_{0:t}(\mathbf x)$   is 1-strongly convex with respect to\footnote{
    With some abuse of notation we use the same symbols  (resp. $\lVert \cdot \rVert_{(t)}$ and $\Vert \cdot \rVert_{(t),*}$) to denote the norms and the dual norms for OBC and PCOC. The interpretation of the symbols should be clear from the context.
}
\begin{equation}\label{eq:normNew}
\lVert \mathbf{x} \rVert_{(t)}^2 = \sum_{i=1}^N \sigma_{1:t,i} x_i^2 , \quad \text{with} \quad \Vert \mathbf{x} \rVert_{(t),*}^2 = \sum_{i=1}^N\frac{x_i^2}{\sigma_{1:t,i}}.
\end{equation}

\section{Performance Analysis}\label{sec:performance}
Here we prove theoretical guarantees for the regret bounds of the algorithms presented in the previous section
in the case of a single cache and multiple requests at each time-slot. 
We show that both algorithms enjoy sublinear regrets even if gradient predictions are inaccurate.
\subsection{Regret bound of OBC with single cache and $R$ requests}
We extend the regret bound in~\cite[Theorem~1]{mhaisen2022online} to the case of batched requests, but we also improve the coefficients taking into account the capacity constraint.
\begin{theorem}\label{thm:OBC}
The regret of OBC is bounded as follows:
\begin{equation}\label{eq:thm1}
\centering
R_T(OBC)\le 2\sqrt{2\min\{k,N-k\}\cdot \sum_{t=1}^T {\lVert \mathbf{g}_t - \tilde{\mathbf{g}}_t \rVert}^2}.
\end{equation}
\end{theorem}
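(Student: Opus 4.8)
The plan is to follow the standard FTRL regret analysis for optimistic algorithms (as in Mohri et al.~\cite{mohri2016accelerating} and Mhaisen et al.~\cite{mhaisen2022online}), but to track the capacity constraint carefully so as to replace the naive bound on the diameter of $\mathcal{X}$ by $\min\{k,N-k\}$. First I would recall the generic FTRL-with-prediction bound: for a proximal regularizer sequence $r_t$ such that $r_{1:t}$ is $1$-strongly convex w.r.t.\ $\lVert\cdot\rVert_{(t)}$, one has
\begin{equation*}
\sum_{t=1}^T f_{\mathbf r_t}(\mathbf x_t) - \sum_{t=1}^T f_{\mathbf r_t}(\mathbf x^\star) \le r_{1:T}(\mathbf x^\star) + \sum_{t=1}^T \lVert \mathbf g_t - \tilde{\mathbf g}_t\rVert_{(t-1),\star}^2 ,
\end{equation*}
using linearity of $f_{\mathbf r_t}$ (so $f_{\mathbf r_t}(\mathbf x)-f_{\mathbf r_t}(\mathbf y)=\mathbf g_t^\intercal(\mathbf x-\mathbf y)$) and the fact that each $r_t$ is proximal and nonnegative. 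This is the part I would either cite or reprove in a couple of lines via the ``be-the-leader'' / strong-convexity telescoping argument.

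Next I would bound the two terms using the specific choice \eqref{eq:paramMhaisen}. For the stability term, since $\lVert\cdot\rVert_{(t-1),\star}^2 = \lVert\cdot\rVert^2/\sigma_{1:t-1}$ and $\sigma_{1:t}=\sigma\sqrt{h_{1:t}}$ telescopes, the sum $\sum_t h_t/\sigma_{1:t-1}$ is controlled by the elementary inequality $\sum_{t=1}^T \frac{a_t}{\sqrt{a_{1:t}}} \le 2\sqrt{a_{1:T}}$ applied with $a_t = h_t$; a little care is needed with the $t=1$ term but the standard fix (defining things so the $t=0$ regularizer carries the burden, or bounding $h_1/\sigma_{1:0}$ separately) handles it, giving roughly $\tfrac{2}{\sigma}\sqrt{h_{1:T}}$. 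For the penalty term $r_{1:T}(\mathbf x^\star) = \tfrac{1}{2}\sum_{t}\sigma_t\lVert \mathbf x^\star - \mathbf x_t\rVert^2 \le \tfrac{1}{2}\sigma_{1:T}\, D^2 = \tfrac{\sigma}{2}\sqrt{h_{1:T}}\, D^2$, where $D^2 = \max_{\mathbf x,\mathbf y\in\mathcal X}\lVert\mathbf x-\mathbf y\rVert^2$. Balancing $\sigma$ to equate the two contributions yields a bound of the form $2D\sqrt{h_{1:T}} = 2D\sqrt{\sum_t \lVert\mathbf g_t-\tilde{\mathbf g}_t\rVert^2}$.

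The key refinement — and the step I expect to be the main obstacle — is showing $D^2 = \max_{\mathbf x,\mathbf y\in\mathcal X}\lVert\mathbf x-\mathbf y\rVert^2 \le 2\min\{k,N-k\}$ rather than the crude $D^2\le 2k$ (or $D^2\le N$). On the capped simplex $\mathcal X=\{x\in[0,1]^N: \sum_i x_i = k\}$, the squared Euclidean diameter is attained at two $0/1$ vertices $\mathbf x,\mathbf y$; writing $\lVert\mathbf x-\mathbf y\rVert^2 = \lvert \mathrm{supp}(\mathbf x)\triangle\mathrm{supp}(\mathbf y)\rvert$ and noting both supports have size $k$, the symmetric difference has size at most $2\min\{k,N-k\}$ (it is $2(k-|{\rm supp}(\mathbf x)\cap{\rm supp}(\mathbf y)|)$, and the intersection is at least $\max\{0,2k-N\}$). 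I would give this as a short combinatorial lemma. Plugging $D = \sqrt{2\min\{k,N-k\}}$ into the balanced bound gives exactly $2\sqrt{2\min\{k,N-k\}\cdot\sum_{t=1}^T\lVert\mathbf g_t-\tilde{\mathbf g}_t\rVert^2}$, which is \eqref{eq:thm1}. A final remark: the argument nowhere uses that the batches have equal size or bounded per-file counts, so Assumption~1 is not needed here; it will only matter later when $\lVert\mathbf g_t-\tilde{\mathbf g}_t\rVert^2$ is turned into an explicit $O(\sqrt T)$ rate.
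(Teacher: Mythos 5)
Your proposal is correct and follows essentially the same route as the paper: the Mohri-style optimistic FTRL bound, the $\sum_t h_t/\sqrt{h_{1:t}}\le 2\sqrt{h_{1:T}}$ inequality (the paper's \cite{auer2002adaptive} Lemma~3.5), the diameter bound $\Delta^2=2\min\{k,N-k\}$ for the capped simplex, and the final balancing $\sigma=2/\Delta$. The only difference is cosmetic: you prove the diameter bound via a short combinatorial argument at the $0/1$ vertices, whereas the paper simply cites it from~\cite{paschos2019learning}.
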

\begin{proof}
We start from the inequality in~\cite[Theorem~1]{mohri2016accelerating},
\begin{equation}
R_T \le r_{1:T}(\mathbf{x}^\star) + \sum_{t=1}^T {\lVert \mathbf{g}_t - \tilde{\mathbf{g}}_t \rVert}_{(t),\star}^2, \quad \forall \mathbf{x}^\star \in \mathcal{X}.
\end{equation}
Substituting the regularization functions we obtain
\begin{equation}\label{eq:starting}
R_T \le \frac{\sigma}{2} \sum_{t=1}^T (\sqrt{h_{1:t}} - \sqrt{h_{1:t-1}}) \lVert \mathbf{x}^\star - \mathbf{x}_t \rVert^2 + \sum_{t=1}^T \frac{h_t}{\sigma \sqrt{h_{1:t}}}
\end{equation}
In our case, as highlighted in~\cite{paschos2019learning}, the Euclidean diameter of $\mathcal{X}$ is upper bounded by $\Delta$
\begin{equation}
\lVert \mathbf{x} - \mathbf{x}_t \rVert^2 \le \Delta^2 \triangleq \min\{2k, 2(N-k)\}, \forall \mathbf x, \mathbf x_t \in \mathcal X.
\end{equation}
Introducing $\Delta$ in~\eqref{eq:starting}, and using \cite[Lemma~3.5]{auer2002adaptive} it follows
\begin{equation}
\begin{split}
R_T \le \frac{\sigma}{2} \Delta^2 \sum_{t=1}^T (\sqrt{h_{1:t}} - \sqrt{h_{1:t-1}}) + \sum_{t=1}^T \frac{h_t}{\sigma \sqrt{h_{1:t}}}\\
\le \frac{\sigma}{4} \Delta^2 \sqrt{h_{1:T}} + \frac{2}{\sigma} \sqrt{h_{1:T}}
= (\frac{\sigma}{2} \Delta^2 + \frac{2}{\sigma}) \sqrt{h_{1:T}}.
\end{split}
\end{equation}
Setting $\sigma = 2/\Delta$ we obtain the desired bound.
\end{proof}
Theorem~\ref{thm:OBC} shows that the regret bound depends on the cache size, and on the accuracy in the predictions. The algorithm enjoys a zero regret if the cache is able to store the complete catalog, i.e., $k=N$, or if predictions are perfect, i.e., $\tilde{g}_t = g_t$. On the other hand, even if predictions are imperfect, OBC may guarantee sublinear regret, as shown by the following corollary.
\begin{corollary}
Under \textit{Assumption 1},
\begin{equation}
    \centering
    R_T \le 2\lVert w\rVert_{\infty}\sqrt{2\min\{k,N-k\}TRh} = O(\sqrt{T}).
\end{equation}
\end{corollary}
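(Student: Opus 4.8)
The plan is to feed the statement of Theorem~\ref{thm:OBC} into a worst-case estimate of the accumulated gradient prediction error $\sum_{t=1}^T \lVert \mathbf{g}_t - \tilde{\mathbf{g}}_t \rVert^2$ that exploits Assumption~1. Since this sum is the only quantity in~\eqref{eq:thm1} that depends on the request sequence and on the predictions, bounding it by something linear in $T$ with a constant that does not depend on the horizon immediately gives a $O(\sqrt{T})$ regret.

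First I would make the per-slot error explicit in terms of request counts. Using $g_{t,i}=-w_i r_{t,i}$ and $\tilde{g}_{t,i}=-w_i\tilde{r}_{t,i}$,
\begin{equation*}
\lVert \mathbf{g}_t - \tilde{\mathbf{g}}_t \rVert^2 = \sum_{i=1}^N w_i^2\,(r_{t,i}-\tilde{r}_{t,i})^2 \le \lVert w \rVert_\infty^2 \sum_{i=1}^N (r_{t,i}-\tilde{r}_{t,i})^2 .
\end{equation*}
Then I would invoke Assumption~1: within each batch $\sum_{i} r_{t,i}=R$ and $r_{t,i}\in\{0,\dots,h\}$, hence $r_{t,i}^2\le h\, r_{t,i}$ and $\sum_i r_{t,i}^2\le h\sum_i r_{t,i}=hR$; since the predicted counts are request counts of the same nature (and the uninformative predictor $\tilde{\mathbf{g}}_t=\mathbf{0}$ already realizes an order-$hR$ worst case), this gives $\sum_i (r_{t,i}-\tilde{r}_{t,i})^2 \le hR$. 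Summing over $t=1,\dots,T$ yields $\sum_{t=1}^T \lVert \mathbf{g}_t - \tilde{\mathbf{g}}_t \rVert^2 \le \lVert w \rVert_\infty^2\, T R h$.

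Substituting into~\eqref{eq:thm1} then gives $R_T(OBC)\le 2\sqrt{2\min\{k,N-k\}\cdot \lVert w\rVert_\infty^2 T R h} = 2\lVert w\rVert_\infty \sqrt{2\min\{k,N-k\}\,T R h}$, and since $N$, $k$, $R$, $h$ and $\lVert w\rVert_\infty$ are all constants with respect to the time horizon, the right-hand side is $O(\sqrt{T})$, which is the assertion. No inequality beyond Theorem~\ref{thm:OBC} and $r_{t,i}^2\le h\,r_{t,i}$ is required; the rest is a substitution of constants.

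The step I expect to need the most care is the per-slot bound $\sum_i (r_{t,i}-\tilde{r}_{t,i})^2\le hR$, because it depends on how the predictions are modelled. With completely unconstrained predictions this quantity is unbounded, so one must either restrict $\tilde{r}_{t,i}$ to the same range $\{0,\dots,h\}$ as the true counts (in which case a disjoint-support adversary attains up to $2hR$, and the stated constant should be read up to this factor) or argue, as above, that the worst case relevant to a sublinear-regret guarantee is the uninformative predictor, for which the dominant contribution is exactly $\sum_i r_{t,i}^2\le hR$. Everything else is routine.
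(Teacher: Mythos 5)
Your proposal follows essentially the same route as the paper, whose entire argument is to assert the per-slot bound $\lVert \mathbf{g}_t - \tilde{\mathbf{g}}_t \rVert^2 \le \lVert w\rVert_{\infty}^2 Rh$ under Assumption~1 and substitute it into Theorem~\ref{thm:OBC}, exactly as you do via $r_{t,i}^2 \le h\, r_{t,i}$. The caveat you flag is fair but is a gap in the paper's own statement rather than in your argument: Assumption~1 constrains only the true requests, and if the predicted counts are merely required to lie in the same set then a disjoint-support adversary gives $\sum_i (r_{t,i}-\tilde{r}_{t,i})^2 \le 2hR$, i.e., the stated constant holds only up to a factor $\sqrt{2}$ unless one additionally restricts the predictor as you describe.
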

The proof easily follows from $\lVert \tilde{\mathbf{g}}_t - \mathbf{g}_t \rVert^2 \le \lVert w\rVert_{\infty}^2Rh$ under \textit{Assumption 1}.

\subsection{Regret bound of PCOC}
The following proof follows the steps in~\cite[Corollary 2]{mohri2016accelerating}, introducing the adjustable parameter $\sigma \ge 0$ in the definition of the regularizer~\ref{eq:regulNew} and taking into account that $x_i \in [0,1]$ for our caching application.
\begin{theorem}\label{thm:PCOC}
The regret of PCOC is bounded as follows
\begin{equation}\label{eq:finalmohriB}
R_T(PCOC) \le 2\sum_{i=1}^N \sqrt{\sum_{t=1}^T (g_{t,i}-\tilde{g}_{t,i})^2}.
\end{equation}
\end{theorem}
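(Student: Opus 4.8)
The plan is to follow the structure of the proof of Theorem~\ref{thm:OBC} but to carry everything out coordinate by coordinate. I would start from the same generic optimistic-FTRL inequality invoked there, $R_T \le r_{1:T}(\mathbf{x}^\star) + \sum_{t=1}^T \lVert \mathbf{g}_t - \tilde{\mathbf{g}}_t\rVert_{(t),\star}^2$ for every $\mathbf{x}^\star \in \mathcal{X}$ (the proximal-regularizer version of \cite[Theorem~1]{mohri2016accelerating}). This applies here because each per-round piece of~\eqref{eq:regulNew} is minimized at $\mathbf{x}_t$, hence proximal, and is nonnegative since $\sigma_{t,i}=\sigma(\Delta_{t,i}-\Delta_{t-1,i})\ge 0$ as $\Delta_{t,i}$ is nondecreasing in $t$; and because $r_{1:t}$ is $1$-strongly convex with respect to $\lVert\cdot\rVert_{(t)}$ from~\eqref{eq:normNew}.

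Next I would evaluate the two terms separately. For the stability term, the dual norm in~\eqref{eq:normNew} gives $\lVert \mathbf{g}_t - \tilde{\mathbf{g}}_t\rVert_{(t),\star}^2 = \sum_{i=1}^N (g_{t,i}-\tilde{g}_{t,i})^2/\sigma_{1:t,i}$, and telescoping the increments yields $\sigma_{1:t,i} = \sigma\sum_{s=1}^t(\Delta_{s,i}-\Delta_{s-1,i}) = \sigma\Delta_{t,i}$ (with $\Delta_{0,i}=0$). Summing over $t$ and applying \cite[Lemma~3.5]{auer2002adaptive} to each coordinate separately — with $a_{t}=(g_{t,i}-\tilde{g}_{t,i})^2$, so that $\sum_t a_t/\Delta_{t,i}\le 2\Delta_{T,i}$ — gives $\sum_{t=1}^T \lVert \mathbf{g}_t - \tilde{\mathbf{g}}_t\rVert_{(t),\star}^2 \le \tfrac{2}{\sigma}\sum_{i=1}^N \Delta_{T,i}$. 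For the regularization term, reading~\eqref{eq:regulNew} as the cumulative regularizer assembled from the per-round pieces $\tfrac{\sigma_{s,i}}{2}(x_i-x_{s,i})^2$, we have $r_{1:T}(\mathbf{x}^\star)=\sum_{i=1}^N\sum_{s=1}^T \tfrac{\sigma_{s,i}}{2}(x_i^\star-x_{s,i})^2 \le \tfrac12\sum_{i=1}^N\sigma_{1:T,i} = \tfrac{\sigma}{2}\sum_{i=1}^N\Delta_{T,i}$, where the inequality uses $x_i^\star,x_{s,i}\in[0,1]$ and hence $(x_i^\star-x_{s,i})^2\le 1$.

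Putting the pieces together gives $R_T(PCOC) \le \big(\tfrac{\sigma}{2}+\tfrac{2}{\sigma}\big)\sum_{i=1}^N\Delta_{T,i}$; optimizing over $\sigma$ (equivalently taking $\sigma=2$, the per-coordinate analogue of the choice $\sigma=2/\Delta$ in Theorem~\ref{thm:OBC}, since each coordinate now ranges over an interval of diameter $1$) yields $R_T(PCOC)\le 2\sum_{i=1}^N\Delta_{T,i} = 2\sum_{i=1}^N\sqrt{\sum_{t=1}^T(g_{t,i}-\tilde{g}_{t,i})^2}$, which is the claim.

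I expect the delicate point to be the regularization term: it must be controlled by $\sum_i\sigma_{1:T,i}$ rather than by a quantity growing with $T$, and this relies on the per-coordinate proximal structure of~\eqref{eq:regulNew} — the coefficient multiplying each anchor $\mathbf{x}_s$ must be the fixed increment $\sigma_{s,i}$, so that they telescope, rather than a weight refreshed at every round. The rest is bookkeeping: the convention $0/0=0$ for coordinates/rounds with $\Delta_{t,i}=0$ (predictions exact so far on file $i$), exactly as in Theorem~\ref{thm:OBC}, and checking that the feasible set enters only through the crude bound $(x_i^\star-x_{s,i})^2\le 1$, which is why — in contrast to OBC — the capacity $k$ does not appear in the final bound.
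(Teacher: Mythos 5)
Your proposal is correct and follows essentially the same route as the paper's proof: the optimistic-FTRL regret inequality of Mohri et al., the bound $(x_i^\star-x_{s,i})^2\le 1$ with the telescoping increments $\sigma_{s,i}$ for the regularization term, the per-coordinate application of \cite[Lemma~3.5]{auer2002adaptive} to the dual-norm stability term, and the choice $\sigma=2$. Your explicit verification that $\sigma_{1:t,i}=\sigma\Delta_{t,i}$ and your reading of~\eqref{eq:regulNew} as the cumulative regularizer with per-round increments are exactly the intended interpretation, so no gap remains.
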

\begin{proof}
From~\cite[Theorem~3]{mohri2016accelerating}, applying the regularization function defined in~\eqref{eq:regulNew} and the norms defined in~\eqref{eq:normNew}, we obtain
\begin{gather}
    R_T \le \frac{\sigma}{2}\sum_{i=1}^N\sum_{s=1}^T (\Delta_{s,i}-\Delta_{s-1,i})(x_i-x_{s,i})^2 + \sum_{t=1}^T \lVert \mathbf{g}_t - \tilde{\mathbf{g}}\rVert_{(t),\star}^2 \nonumber \\
    \stackrel{\text{\tiny (a)}}{\le} \frac{\sigma}{2}\sum_{i=1}^N\sqrt{\sum_{t=1}^T (g_{t,i}-\tilde{g}_{t,i})^2} + \sum_{t=1}^T \lVert \mathbf{g}_t - \tilde{\mathbf{g}}\rVert_{(t),\star}^2  \nonumber \\ \stackrel{\text{\tiny (b)}}{\le}\frac{\sigma}{2}\sum_{i=1}^N\sqrt{\sum_{t=1}^T (g_{t,i}-\tilde{g}_{t,i})^2} + \frac{2}{\sigma}\sum_{i=1}^N\sqrt{\sum_{t=1}^T (g_{t,i}-\tilde{g}_{t,i})^2}  \nonumber \\
= \left (\frac{\sigma}{2} + \frac{2}{\sigma} \right )\sum_{i=1}^N\sqrt{\sum_{t=1}^T (g_{t,i}-\tilde{g}_{t,i})^2},
\end{gather}
where (a) follows from $(x_i-x_{s,i})^2 \le 1$ and the results of the telescopic sum $\sum_{s=1}^t \Delta_{s,i}-\Delta_{s-1,i}$, and (b) from the application of~\cite[Lemma 3.5]{auer2002adaptive} to $\sum_{t=1}^T \lVert \mathbf{g}_t - \tilde{\mathbf{g}}\rVert_{(t),\star}^2$ once the definition of dual norm in~\eqref{eq:normNew} has been applied. For the minimization of the regret bound we can set $\sigma = 2$. 
\end{proof}
Similar to OBC, PCOC has zero regret under perfect predictions, and sublinear regret under \textit{Assumption 1}.
\begin{corollary}
Under \textit{Assumption 1},
\begin{equation}
    \centering
     R_T \le 2Nh\lVert w\rVert \sqrt{T} = O(\sqrt{T}).
\end{equation}
\end{corollary}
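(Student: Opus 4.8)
The plan is to reduce the corollary to a one-line substitution into Theorem~\ref{thm:PCOC}. Recall that for our cost function the gradient components are $g_{t,i} = -w_i r_{t,i}$ and the predicted ones are $\tilde g_{t,i} = -w_i \tilde r_{t,i}$, so the per-coordinate, per-slot prediction error factorizes as $(g_{t,i}-\tilde g_{t,i})^2 = w_i^2 (r_{t,i}-\tilde r_{t,i})^2$. Under \textit{Assumption 1} the true counts satisfy $r_{t,i}\in\{0,\dots,h\}$, and it is natural to take the predictor to return values in the same range (clipping it to $[0,h]$ can only decrease the error); hence $|r_{t,i}-\tilde r_{t,i}|\le h$ and therefore $(g_{t,i}-\tilde g_{t,i})^2 \le w_i^2 h^2 \le \lVert w\rVert_\infty^2 h^2$ for every $t$ and every $i$.

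Summing this bound over the $T$ time-slots gives $\sum_{t=1}^T (g_{t,i}-\tilde g_{t,i})^2 \le \lVert w\rVert_\infty^2 h^2 T$ for each file $i$, so $\sqrt{\sum_{t=1}^T (g_{t,i}-\tilde g_{t,i})^2}\le \lVert w\rVert_\infty h\sqrt{T}$. Plugging this into the right-hand side of~\eqref{eq:finalmohriB} and summing the $N$ identical terms yields $R_T(PCOC)\le 2N\lVert w\rVert_\infty h\sqrt{T}$, which is exactly the claimed bound (the statement writes $\lVert w\rVert$ for $\lVert w\rVert_\infty$, consistently with the OBC corollary) and is manifestly $O(\sqrt{T})$ since $N$, $h$ and $\lVert w\rVert_\infty$ do not depend on the horizon.

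There is essentially no obstacle: the only modelling point that deserves to be stated explicitly is the bound $|r_{t,i}-\tilde r_{t,i}|\le h$, i.e.\ that the per-file per-batch prediction error is controlled by the same constant $h$ that caps the true counts; granting this, the rest is the crude per-coordinate estimate above. If one prefers not to assume bounded predictions, the identical argument goes through with $h$ replaced by any a priori bound on $\max_{t,i}|r_{t,i}-\tilde r_{t,i}|$. A marginally sharper constant is available by keeping the individual weights, namely $R_T(PCOC)\le 2h\sqrt{T}\sum_{i=1}^N w_i$, from which the stated form follows via $\sum_{i=1}^N w_i \le N\lVert w\rVert_\infty$; I would mention this only as a remark, since the coarser form already suffices for the $O(\sqrt{T})$ conclusion.
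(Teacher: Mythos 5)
Your argument is exactly the paper's: the corollary is obtained by bounding the per-coordinate error as $(g_{t,i}-\tilde g_{t,i})^2 \le \lVert w\rVert_\infty^2 h^2$ under Assumption~1 and substituting into the bound of Theorem~\ref{thm:PCOC}, giving $2Nh\lVert w\rVert\sqrt{T}$. Your explicit remark that this implicitly requires the predicted counts (or at least the prediction error) to be bounded by the same $h$ is a fair clarification of a point the paper leaves tacit, but the route is the same.
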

The proof follows from $(g_{t,i}-\tilde{g}_{t,i})^2 \le \lVert w\rVert^2h^2$ under \textit{Assumption 1}.

\subsection{Comparison between the two regret bounds}
We compare the two bounds presented above in two specific scenarios for the prediction error: i)~a constant error on each component of the gradient, and ii)~a prediction error proportional to the popularity of the files in the catalog.

In the first case, OBC presents a better bound with respect to the one obtained by PCOC. In fact, say that $|g_{t,i} - \tilde g_{t,i}| = \epsilon$ for each $i $ and $t$, then $R_T(OBC) = 2\sqrt{2\min\{k,N-k\}NT\epsilon^2} \le  2N\sqrt{T\epsilon^2} = R_T(PCOC)$.

In the second case, PCOC may perform better because it specifically takes into account the heterogeneity of the prediction error across the components. We deviate here from the adversarial request model and consider 
that 1) requests arrive according to a Poisson process with rate~$\lambda$, and 2) a request is for file $i$ with probability $p_i$ independently from the past~\cite{irm-fagin-1977}. Moreover, we assume 
the algorithm is executed every time unit, and per-file costs equal $1$. In this case,  $g_{t,i} \sim \textrm{Poisson}(\lambda p_i)$ for each $i\in \mathcal{N}$.
We compute the expected value of the bounds in \eqref{eq:thm1} and in \eqref{eq:finalmohriB}, assuming that the cache can store a fraction $\alpha$ of the catalog ($k= \alpha N$), and $\tilde g_{t,i} = \lambda p_i$, i.e., we have a perfect predictors for the expected number of future requests. For the OBC bound, we obtain
\begin{gather}
    \mathbb{E}\left [2\sqrt{2\alpha N \sum_{t=1}^T \sum_{i=1}^N (g_{t,i}-\tilde{g}_{t,i})^2} \right ] \le \nonumber \\
    \le 2\sqrt{2\alpha N \sum_{t=1}^T \sum_{i=1}^N \mathbb{E}[ (g_{t,i}-\tilde{g}_{t,i})^2]} = \nonumber \\
    = 2\sqrt{2\alpha N \sum_{t=1}^T \sum_{i=1}^N \lambda p_i} = 
    2\sqrt{2\alpha \lambda N T}\quad.\label{eq:OBCexp}
\end{gather}
For the PCOC bound, we obtain
\begin{gather}
    \mathbb{E}\left [\sum_{i=1}^N\sqrt{\sum_{t=1}^T (g_{t,i}-\tilde{g}_{t,i})^2} \right ] \le
    \sum_{i=1}^N\sqrt{\sum_{t=1}^T \mathbb{E}[(g_{t,i}-\tilde{g}_{t,i})^2]} \nonumber \\
    = 2\sum_{i=1}^N\sqrt{\sum_{t=1}^T \lambda p_i} = 2\sum_{i=1}^N\sqrt{ T  \lambda p_i}\quad.\label{eq:PCOCexp}
\end{gather}
Comparing the two bounds~\eqref{eq:PCOCexp} and~\eqref{eq:OBCexp}, 
we find that~\eqref{eq:PCOCexp} is a smaller than~\eqref{eq:OBCexp} when $\alpha \ge \left ( \sum_{i=1}^N\sqrt{p_i}\right )^2/(2N\sum_{i=1}^N p_i$). If $p_i$ obeys to a Zipf law with exponent $\beta$, we can numerically find from the inequality the minimum value of $\alpha$ such that the bound of~\eqref{eq:PCOCexp} is tighter. In Figure~\ref{fig:aVSb} we can notice that the threshold for $\alpha$ decreases as $\beta$ increases. In the case of a uniform popularity distribution ($\beta = 0$), OBC outperforms PCOC unless the cache can store at least  half of the catalog. As the popularity distribution becomes more skewed, PCOC is expected to perform better than OBC in terms of regret bound, but for very small caches. 
\begin{figure}
    \centering
    \includegraphics[width=0.8\linewidth]{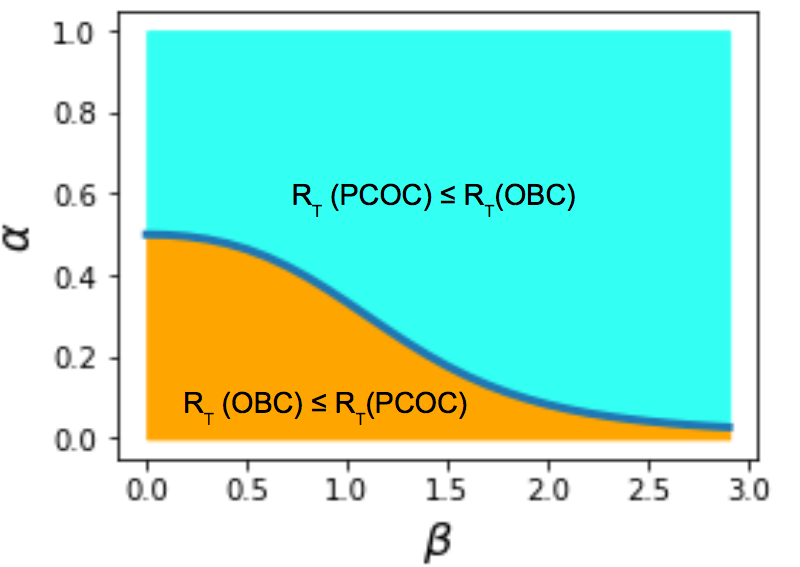}
    \caption{OBC vs PCOC, different regimes for the regret as a function of the Zipf exponent ($\beta$) and the relative cache size ($k = \alpha N$).}
    \label{fig:aVSb}
\end{figure}
\subsection{Batch Selection}\label{sec:SinglevsBatch}
We maintain the Poisson assumption about the request arrival process and evaluate what is the effect of requests batching on the regret, focusing on the bound in Theorem~\ref{thm:OBC} (the same analysis can be carried out on the bound in Theorem~\ref{thm:PCOC}). 
We analyse the expected value of such bound in a general batched-requests setting where the caching decisions are taken every $\tau$ among an overall time interval of $\Theta$ time units where a single request is available at each time. Looking at the expected value of the regret bound we have:
\begin{equation}
    \mathbb{E} \left[R_{\Theta/\tau}\right] \le \mathbb{E}\left[C\sqrt{\sum_{t =1}^{\Theta/\tau}\sum_{i=1}^N (g_{t,i}-\tilde{g}_{t,i})^2}\right],
\end{equation}
where $C \triangleq 2\sqrt{2\min \{k,N-k\}}$. In this case we have $g_{t,i}\sim \textrm{Poisson}(\lambda_i\tau)$. For the predictions $\tilde{g}_{t,i}$ we consider two options: i) they  coincide with the expected number of future requests, or ii) they coincide with the requests seen during the previous times-lots.

In the first case we have
\begin{small}
\begin{align}\label{eq:singleVSbatch_1}
\mathbb{E}\left[C\sqrt{\sum_{t =1}^{\Theta/\tau}\sum_{i=1}^N (g_{t,i}-\tilde{g}_{t,i})^2}\right] \stackrel{\text{\tiny (a)}}\le 
C\sqrt{\sum_{t =1}^{\Theta/\tau}\sum_{i=1}^N\mathbb{E}\left[(g_{t,i}-\tilde{g}_{t,i})^2\right]} = \notag \\
= C\sqrt{\sum_{t =1}^{\Theta/\tau}\sum_{i=1}^N \textrm{Var}(g_{t,i})} = C\sqrt{\sum_{t =1}^{\Theta/\tau}\sum_{i=1}^N \lambda_i\tau} = C\sqrt{\Theta\sum_{i=1}^N \lambda_i},
\end{align}
\end{small}
where (a) follows from Jensen's inequality.
The right hand side of~\eqref{eq:singleVSbatch_1} suggests that batching has no effect on the algorithm's regret.

In the second case, for $t>1$, $\tilde{g}_{t,i} = g_{t-1,i}=n_{t,i}(\tau)\sim \textrm{Poisson}(\lambda_i\tau)$, where $n_{t,i}(\tau)$ is the number of arrivals within the interval $[(t-2)\tau,(t-1) \tau]$.
The initial prediction is given by $\tilde{g}_{1,i} = \frac{n_i(\tau_0)}{\tau_0}$, where $\tau_0$ is a first warm-up interval. Looking at the expectation of $(g_{t,i}-\tilde{g}_{t,i})^2$, we have
\begin{gather}
    \mathbb{E}[(g_{t,i}-\tilde{g}_{t,i})^2] = \notag\\ \mathbb{E}[(g_{t,i}-\tilde{g}_{t,i} - \mathbb{E}[g_{t,i}]+\mathbb{E}[g_{t,i}]-\mathbb{E}[\tilde{g}_{t,i}]+\mathbb{E}[\tilde{g}_{t,i}])^2] = \notag \\
    \textrm{Var}(g_{t,i})+\textrm{Var}(\tilde{g}_{t,i}) + (\mathbb{E}[g_{t,i}] - \mathbb{E}[\tilde{g}_{t,i}])^2 = \notag \\
    = \begin{cases}
    2\lambda_i\tau, & t > 1\\
    \lambda_i\tau + (\frac{\tau}{\tau_0})^2\lambda_i\tau_0, & t=1.
    \end{cases}
\end{gather}
Summing all the terms over $N$ and $\Theta/\tau$, we obtain
\begin{equation}\label{eq:expected1}
    \sum_{i=1}^N \sum_{t=1}^{\Theta/\tau} \mathbb{E}[(g_{t,i}-\tilde{g}_{t,i})^2] = \sum_{i=1}^N \frac{\Theta-\tau}{\tau} 2\lambda_i\tau + \lambda_i\tau +m^2\tau^2,
\end{equation}
where $m^2 \triangleq \frac{\lambda_i\tau_0}{\tau_0^2}$. 
Under these predictions, there is indeed an optimal timescale $\tau^*$ for batching, that is
$\tau^* = \min\{\frac{\tau_0}{2},\Theta\}$. Hence, in case of a good initial prediction (large $\tau_0$) we should select $\tau = \Theta$. Otherwise, in case of a less accurate initial prediction we should choose a smaller value $\tau= \frac{\tau_0}{2}$.

\section{Numerical Results}\label{sec:experiments}
\begin{figure}[t]
\captionsetup[subfigure]{aboveskip=-1pt,belowskip=-1pt}
        \begin{subfigure}{0.4\linewidth}
        \centering
        \includegraphics[scale=0.4]{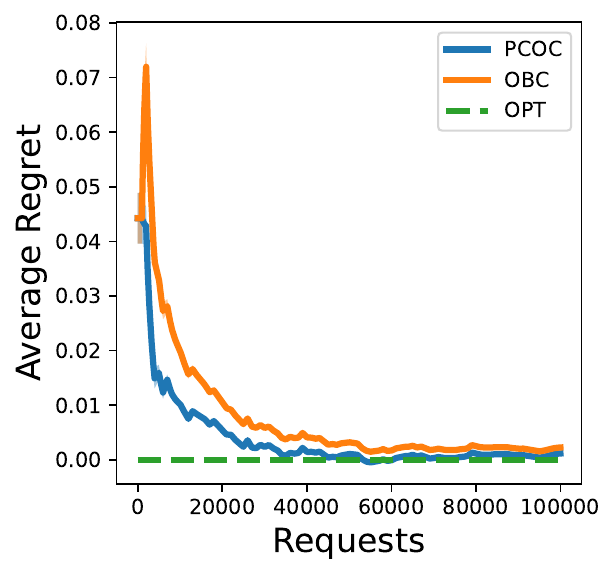}
        \caption{$\beta=0.8$, $k=50$}
        \label{fig:0.8-50}
        \end{subfigure}
        \hspace{10mm}
        \begin{subfigure}{0.4\linewidth}
        \centering
        \includegraphics[scale=0.4]{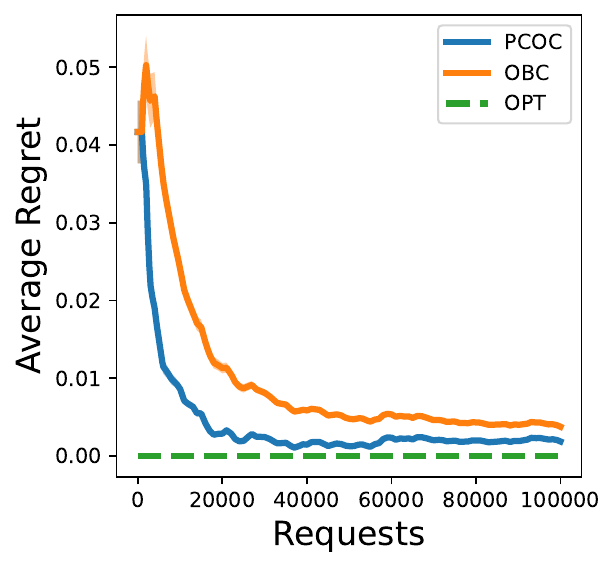}
        \caption{$\beta=1.5$, $k=50$}
        \label{fig:1.5-50}
        \end{subfigure}\\
        \begin{subfigure}{0.4\linewidth}
        \centering
        \includegraphics[scale=0.4]{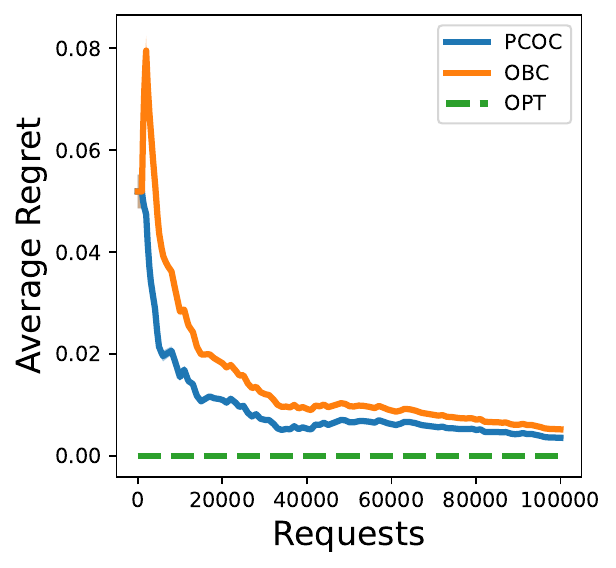}
        \caption{$\beta=0.8$, $k=100$}
        \label{fig:0.8-100}
        \end{subfigure}
        \hspace{10mm}
        \begin{subfigure}{0.4\linewidth}
        \centering
        \includegraphics[scale=0.4]{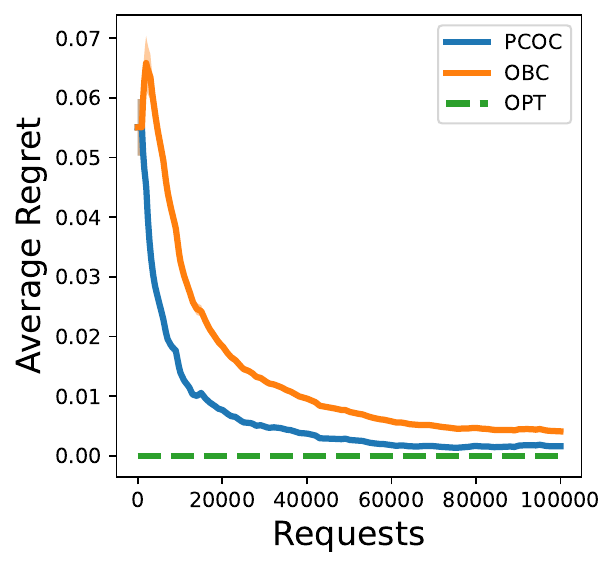}
        \caption{$\beta=1.5$, $k=100$}
        \label{fig:1.5-100}
        \end{subfigure}\\
        \begin{subfigure}{0.4\linewidth}
        \centering
        \includegraphics[scale=0.4]{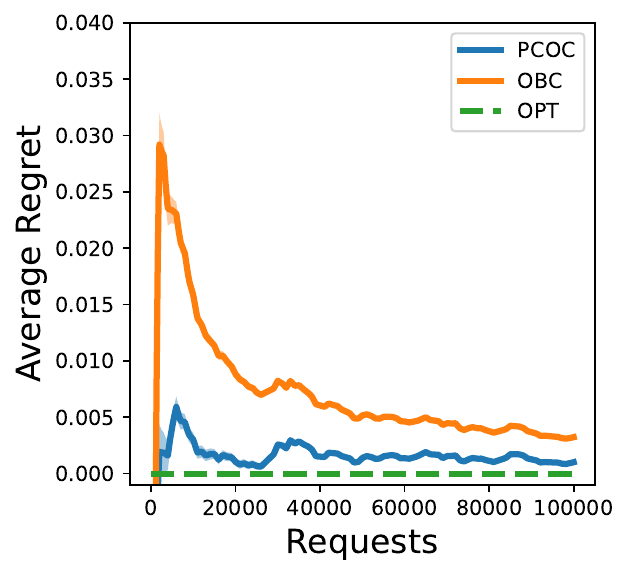}
        \caption{$\beta=0.8$, $k=600$}
        \label{fig:0.8-600}
        \end{subfigure}
        \hspace{10mm}
        \begin{subfigure}{0.4\linewidth}
        \centering
        \includegraphics[scale=0.4]{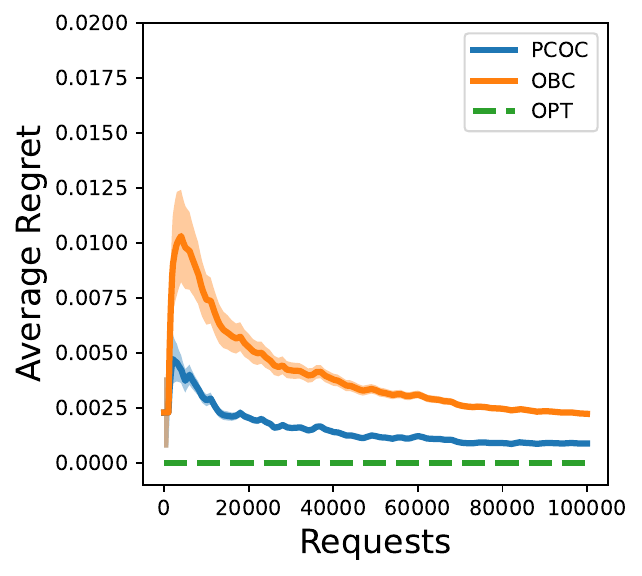}
        \caption{$\beta=1.5$, $k=600$}
        \label{fig:1.5-600}
        \end{subfigure}
        \caption{\ff{PCOC vs. OBC}}
        \label{fig:PCOCvsOBC}
\end{figure}
\begin{figure*}[t]
    \begin{subfigure}{0.32\linewidth}
  \centering
  \includegraphics[scale=0.5]{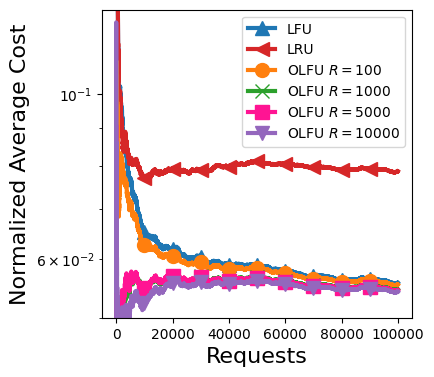}
\caption{$\pi = 1$}
\label{fig:p0_lfu}
\end{subfigure}
\begin{subfigure}{0.32\linewidth}
  \centering
  \includegraphics[scale=0.5]{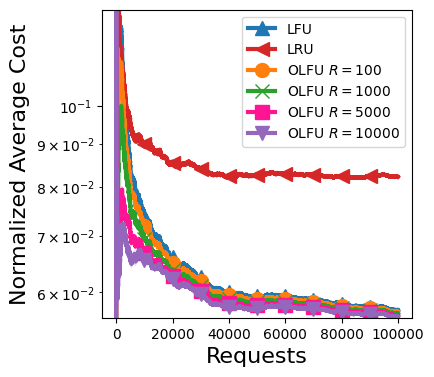}
  \caption{$\pi = 0.7$}
  \label{fig:p0.3_lfu}
\end{subfigure}
 \begin{subfigure}{0.32\linewidth}
      \centering  
    \includegraphics[scale=0.5]{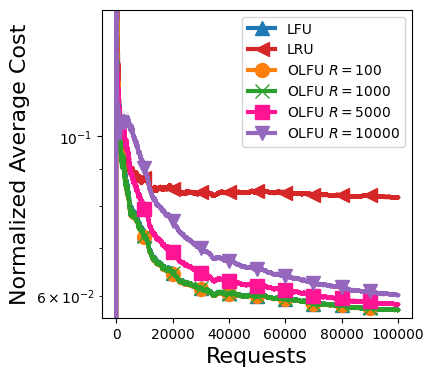}
    \caption{$\pi = 0.1$}
    \label{fig:p0.9_lfu}
\end{subfigure}
    \caption{\ff{Average Miss Ratio of OLFU vs. LFU}}
    \label{fig:OLFUvsLFU} 
    \vspace{-0.1in}
\end{figure*}
\begin{figure*}
    \begin{subfigure}{0.32\linewidth}
  \centering
  \includegraphics[scale=0.5]{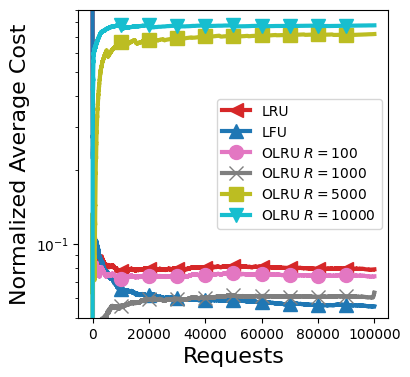}
\caption{$\pi = 1$}
\label{fig:p0_lru}
\end{subfigure}
\begin{subfigure}{0.32\linewidth}
  \centering
  \includegraphics[scale=0.5]{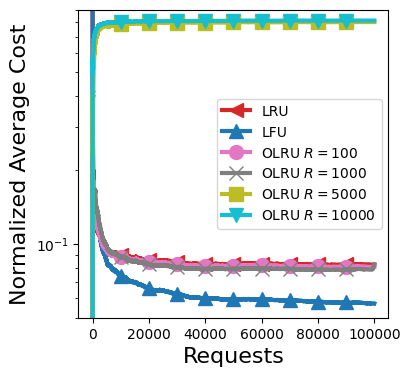}
  \caption{$\pi = 0.7$}
  \label{fig:p0.3_lru}
\end{subfigure}
 \begin{subfigure}{0.32\linewidth}
      \centering  
    \includegraphics[scale=0.5]{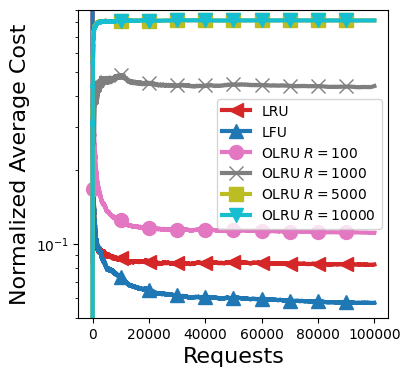}
    \caption{$\pi = 0.1$}
    \label{fig:p0.9_lru}
\end{subfigure}
    \caption{\ff{Average Miss Ratio of OLRU vs. LRU}}
    \label{fig:OLRUvsLRU} 
    \vspace{-0.1in}
\end{figure*}
\begin{figure*}
    \begin{subfigure}{0.32\linewidth}
  \centering
  \includegraphics[scale=0.5]{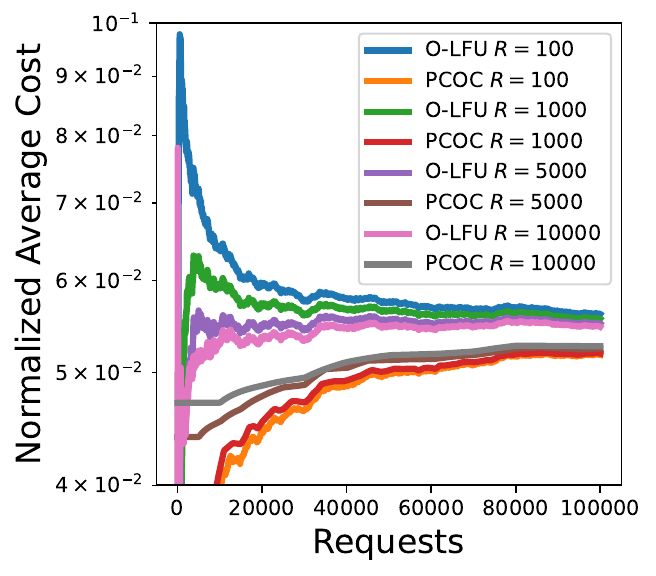}
\caption{$\pi = 1$}
\label{fig:p0}
\end{subfigure}
\begin{subfigure}{0.32\linewidth}
  \centering
  \includegraphics[scale=0.5]{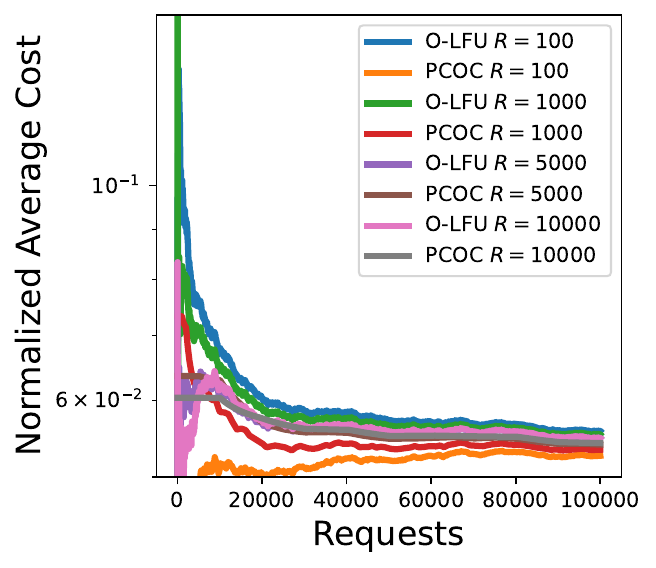}
  \caption{$\pi = 0.7$}
  \label{fig:p0.3}
\end{subfigure}
 \begin{subfigure}{0.32\linewidth}
      \centering  
    \includegraphics[scale=0.5]{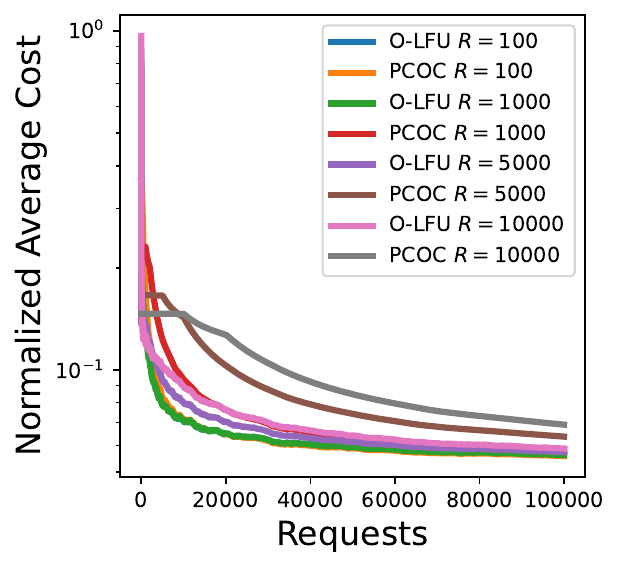}
    \caption{$\pi = 0.1$}
    \label{fig:p0.9}
\end{subfigure}
    \caption{\ff{Average Miss Ratio of PCOC vs. OLFU}}
    \label{fig:PCOCvsOLFU} 
    \vspace{-0.1in}
\end{figure*}
\begin{figure*}[t]
\captionsetup[subfigure]{aboveskip=-1pt,belowskip=-1pt}
\begin{subfigure}{0.5\linewidth}
  \centering
  \includegraphics[scale=0.6]{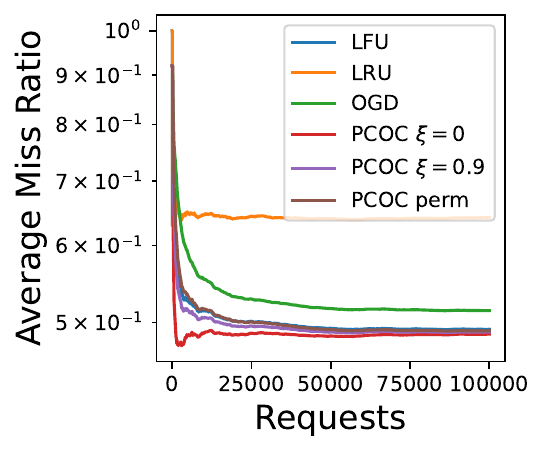}
  \caption{$\beta=0.9$, $k=50$}
  \label{fig:PCOCvsClassicsA}
\end{subfigure} 
\begin{subfigure}{0.5\linewidth}
  \centering
  \includegraphics[scale=0.6]{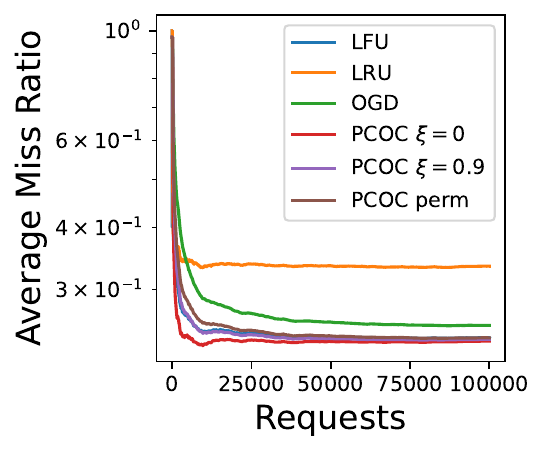}
  \caption{$\beta=1.2$, $k=50$}
  \label{fig:PCOCvsClassicsB}
\end{subfigure}
\caption{PCOC vs. Classic Policies}
\label{fig:PCOCvsClassics}
\end{figure*}
\begin{figure*}[t]
\captionsetup[subfigure]{aboveskip=-1pt,belowskip=-1pt}
\begin{subfigure}{0.32\linewidth}
  \centering
  \includegraphics[scale=0.5]{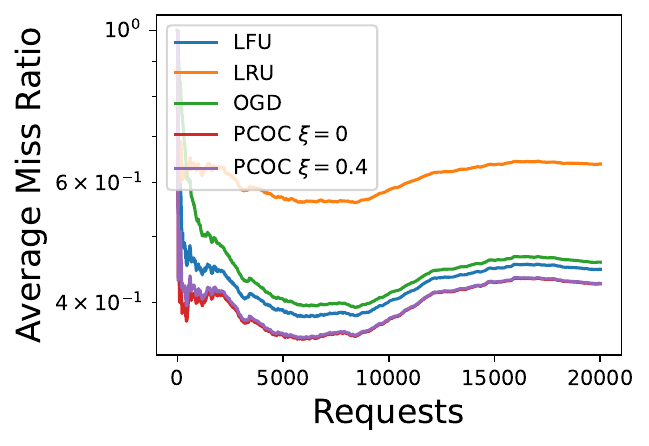}
  \caption{$R=10$, $k=10$}
  \label{fig:RealAndTimeA}
\end{subfigure}
\begin{subfigure}{0.32\linewidth}
  \centering
  \includegraphics[scale=0.5]{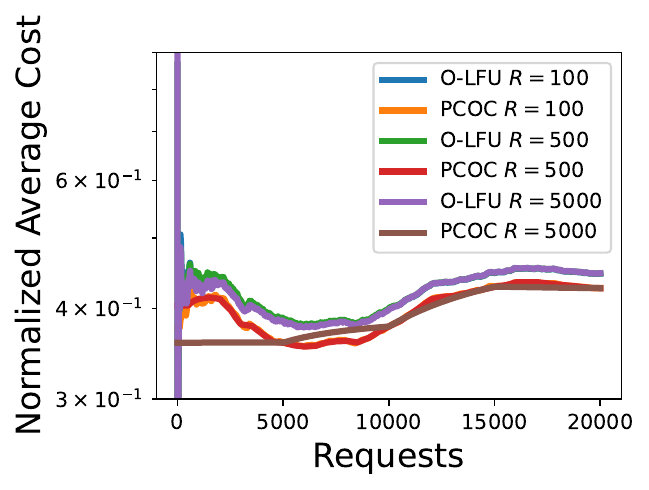}
  \caption{$k=10, \pi = 0.7$}
  \label{fig:RealAndTimeC}
\end{subfigure}%
\begin{subfigure}{0.32\linewidth}
  \centering
  \includegraphics[scale=0.5]{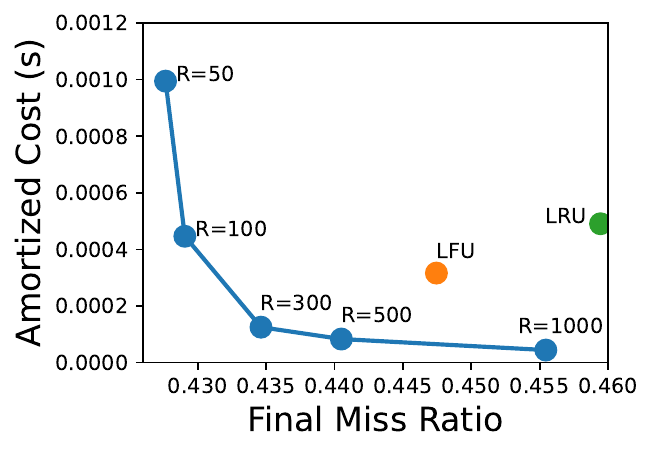}
  \caption{$k=10$, $\xi=0.4$}
  \label{fig:RealAndTimeB}
\end{subfigure}
\caption{Akamai Trace}
\label{fig:RealAndTime}
\end{figure*}
\subsection{Experimental Settings}
\subsubsection{Datasets}
\ff{We evaluated the presented approaches on both synthetic and real traces. For the synthetic case, we generated stationary synthetic traces where individual file requests are generated i.i.d. according to a Zipf distribution with parameter $\beta \in \{0.8,1.2,1.5\}$ from a catalog of $N=1000$ files. 
We evaluate the studied solutions against state-of-the-art algorithms over a horizon of $I=10^5$ requests. 
\textit{Batched} algorithms have a constant batch size, i.e., $R_t = R$ with $R \in \{100,1000,2000,5000,10000\}$ for synthetic traces, and $R\in\{10,50,100,300,1000\}$ for the real trace. 
The cache size $k$ varies in $\{10,50,100,600\}$. 
The real trace counts  $2\cdot10^4$ requests for the $N=10^3$ most popular files as measured at a given server in Akamai CDN provider~\cite{neglia2017access}. In all the experiments we set $w_i = 1, \forall i \in \mathcal{N}$, the cost in~\eqref{eq:cost} corresponds then to the total number of misses. In Figures~\ref{fig:PCOCvsOBC},\ref{fig:OLFUvsLFU},\ref{fig:OLRUvsLRU},\ref{fig:PCOCvsOLFU} and~\ref{fig:RealAndTimeC}, given a vector of requests over the time horizon $T$, we report the average over 30 different runs for predictions and we also plot the 0.95-confidence interval of the normalized average cost and the average regret.} 
\subsubsection{Predictions}
\ff{For the optimistic algorithms' evaluation we considered three types of predictions: \\ 
\textit{Type 1}: the first ones are generated according to $\tilde{\mathbf{g}}_t = (1-\xi)\mathbf{g}_t + \xi \frac{R}{N}$, with $\xi \in [0,1]$;\\
\textit{Type 2}: the second ones are generated as random permutations of the correct gradients;\\ 
\textit{Type 3}: the third case is the same described in~\cite{mhaisen2022optimistic}, where each prediction is assumed to be correct with a probability $\pi$.\\
The first type interpolates between perfect predictions (for $\xi=0$) and a situation where all files appear equally popular (for $\xi=1$). In the second type, files' future popularities are arbitrarily ranked. In the latter case, given the original vector of requests $\mathbf{r}_t$, the prediction vector $\tilde{\mathbf{g}}_t$ is generated by requesting the original files in $\mathbf{r}_t$ with probability $\pi$ and any other random file from the catalog with probability $1-\pi$. }
\subsubsection{Metrics}
We evaluate all the algorithms according to three metrics:\\
i) the \textit{Average Miss Ratio}, i.e., the total cost over the first $t$ iterations, normalized by~$R t$;\\
ii) the \textit{Time Average Regret} over the first $t$ iterations;\\ 
iii) the \textit{Amortized Cost}, i.e., the average computational time per request.
\subsubsection{Online Algorithms}
\ff{We compare OBC and PCOC presented in Section~\ref{sec:optimistic} against classical online algorithms such as LFU, LRU, and OGD~\cite{paschos2019learning}. Furthermore, we designed and implemented optimistic version of LFU and LRU.}

\ff{\textbf{Optimistic Least Frequently Used (OLFU)}. The algorithm takes into account predictions for the next requests but updates the cache state at each new requests according to the LFU eviction policy. At the beginning of each batch of requests, OLFU increases the frequency of each file within the predictions for the next batch of $R$ requests. In the face of a new request, the algorithm i) updates the cache state using LFU with the updated frequencies; ii) checks if the file request was in the predicted batch: if it was not, OLFU increases the frequency for that file and decreases the frequency of a random file from the catalog different from the requested one. At the end of each batch the frequencies of OLFU and the ones computed by a classic LFU policy are equal.}

\ff{\textbf{Optimisitc Least Recently Used (OLRU)}. This policy considers the predictions for the next $R$ requests and consider the files within the batch as the most recently requested. For each file $i \in \mathcal{N}$, the algorithm keeps a counter, namely \textit{last-time-requested}, indicating the last time file $i$ has been requested. In particular, given a batch of predicted requests, OLRU sets the \textit{last-time-requested} counter of all those predicted files to the current time. In the face of a new request, the algorithm updates the cache using LRU, i.e., evicting the least recently used file from the cache according to the counters updated through the predictions.}

\subsection{Results}
\ff{First of all we compare the optimistic versions of LFU and LRU with respect to their classical versions. Afterwards, we focus on the Follow-The-Regularized-Leader-based algorithms evaluating their performance in terms of average regret. Consequently, we compare PCOC with respect to OLFU and classical policies. Finally, we evaluate the optimistic versions of the presented algorithms on the Akamai trace showing also the trade-off between the final missing-ratio and the amortized cost varying the batch size.}

\ff{\textbf{OLFU vs. LFU}. Figure~\ref{fig:OLFUvsLFU} compares OLFU against LFU for different batch sizes and for different levels of the predictions' accuracy with predictions of \textit{Type 3}. We can observe that the batch size plays an important role in the performance of OLFU as the predictions become worse. Indeed, in case of perfect predictions (Figure~\ref{fig:p0_lfu}), the versions of OLFU with the highest batch sizes reach a better miss-ratio with respect to LFU since, as the batch size increases, there is more accurate information about the next requests. On the other hand, with very inaccurate predictions (Figure~\ref{fig:p0.9_lfu}), the higher is the batch size and the worse is the missing-ratio, given the incorrect information brought by the perturbed predictions.}

\ff{\textbf{OLRU vs. LRU}. In contrast with OLFU, as highlighted in Figure~\ref{fig:OLRUvsLRU}, the optimistic version of LRU performs better for small batch sizes as the predictions' accuracy deteriorates. Indeed, the bigger is the batch and the fewer will be the number of cache updates. In this manner, the counters of all the files within the batch will be updated less frequently resulting to be stale. Beyond such a staleness, the performance of the policy deteriorates as the inaccuracy of the predictions increases.}

\textbf{PCOC vs. OBC}. \ff{We compare the two algorithms for different capacities, i.e., $k \in \{50,100,600\}$ and different exponents of the Zipf distribution, i.e., $\beta \in \{0.8,1.5\}$ with $R=1000$ with predictions of \textit{Type 3}. As showed in Figure~\ref{fig:PCOCvsOBC} the difference between the two algorithms becomes significant as the values of $\alpha$ and $\beta$ increase. This confirms the results of Figure~\ref{fig:aVSb} where the difference between the two regrets becomes more evident for higher values of the cache size and the Zipf's exponent. 
In particular when $k=600$, i.e., the cache can store at
least half of the catalog, PCOC clearly outperforms OBC for all the values of $\beta$.}

\ff{ \textbf{PCOC vs. OLFU}. Figure~\ref{fig:PCOCvsOLFU} reports on the comparison between PCOC and OLF for different batch sizes and levels of accuracy in predictions of \textit{Type 3}. For all the algorithms we set the initial cache state as $\mathbf{x}_0 := \arg\max_{x \in \mathcal{X}} \{\tilde{\mathbf{r}}_{1}^\top x\}$, i.e., we entirely store the files with the highest number of requests in the first predicted batch. We can observe that for high levels of accuracy in the predictions (Figure~\ref{fig:p0} and Figure~\ref{fig:p0.3}) PCOC outperforms OLFU for all the different batches. When the predictions have very low accuracy ($\pi=0.1$) PCOC shows the same performance of OLFU for $R=100$, however it still remains competitive reaching the convergence even for higher values of $R$.}

\textbf{PCOC vs. Classic Policies}. Figures~\ref{fig:PCOCvsClassicsA} and \ref{fig:PCOCvsClassicsB} show the performance of PCOC against classical online algorithms in cases where $\beta=0.9$, and $\beta=1.2$, with $R=100$ with predicitons of \textit{Type 1} and \textit{Type 2}. We can notice the benefit of including predictions in the decision process looking at the lower miss ratio of PCOC against LFU. PCOC outperforms LFU even for a noisy factor $\xi$ as large as 0.9 and it is still competitive with LFU when predictions are randomly scrambled.
This confirms the advantage of the optimistic nature of such algorithms.

\textbf{Akamai Trace}. Figure~\ref{fig:RealAndTime} shows the performance of PCOC on the Akamai trace for $k=10$ with predictions of \textit{Type 1}. Figure~\ref{fig:RealAndTimeA} compares PCOC against OGD, LFU and LRU. The latter two policies take a decision at each file request, whilst PCOC and OGD updates the cache every $R=10$ requests. Nevertheless, PCOC outperforms the classic policies. Furthermore, even in a non-stationary case, the predictions can help in reducing the miss ratio. \ff{Figure~\ref{fig:RealAndTimeC} shows the comparison between PCOC and OLFU for different batch sizes and with predictions of \textit{Type 3} with $\pi = 0.7$. We can notice how the difference between the two policies becomes more evident in case of real trace even for higher batch sizes for PCOC.} Finally, in Figure~\ref{fig:RealAndTimeB}, we compare different versions of PCOC that updates the local cache every $R \in \{50,100,300,500,1000\}$ requests. 
The amortized cost vanishes as the value of $R$ increases (since the number of projections performed in the optimization process diminishes) at the cost of higher miss ratio. However, this confirms the applicability of such a batched method with less frequent updates since both the final miss ratio and the time complexity reached by PCOC with $R=300$ and $R=500$ are better than the performance achieved by the most used policies in practice such as LFU and LRU.

\section{Conclusions}\label{sec:conclusions}
We presented online optimistic caching algorithms that enjoy sublinear regret in case of batched requests. First we studied the conditions where PCOC results to have a better regret with respect to OBC. Secondly, we showed that the per-component based solution (PCOC) outperforms classic caching policies and their optimistic versions in different conditions. Finally, we showed that, over a real trace, a batched approach presents better performance in terms of final miss ratio and amortized cost compared to classical caching policies. 

\bibliographystyle{elsarticle-num} 
\bibliography{main}

\end{document}